\title{Selfish Mining and Dyck Words in Bitcoin and Ethereum Networks} 
\titlerunning{Selfish Mining and Dyck Words in Bitcoin and Ethereum Networks}
\author{Cyril Grunspan}{L{\'e}onard de Vinci, Research Center, Paris - La D{\'e}fense, France}{cyril.grunspan@devinci.fr}{}
{}
\author{Ricardo P\'erez-Marco}{CNRS, IMJ-PRG, Univ. Paris 7, Paris, France}{ricardo.perez-marco@imj-prg.fr}{}
{}
\authorrunning{C. Grunspan and R. P\'erez-Marco}
\keywords{Bitcoin, Blockchain, Ethereum, Proof-of-Work,
Selfish Mining, Stubborn Mining, Apparent Hashrate, Revenue Ratio, Catalan
Distributions, Dyck Words, Random Walk.}
\begin{document}

\maketitle

\begin{abstract}
The main goal of this article is to present a direct approach for 
  the formula giving the long-term
  apparent hashrates of Selfish Mining strategies using only elementary
  probabilities and combinatorics, more precisely, Dyck words. We can avoid
  computing stationary probabilities on Markov chain, nor stopping times for
  Poisson processes as in previous analysis. We do apply these
  techniques to other bockwithholding strategies in Bitcoin, and then, we consider 
  also selfish mining in Ethereum.
\end{abstract}

\section{Introduction}
\label{sec:Introduction}

\paragraph*{Background}
Selfish mining (in short SM) is a non-stop blockwithholding 
attack described in {\cite{DBLP:journals/cacm/EyalS18}} which exploits a flaw in the
Bitcoin protocol in the difficulty adjustment formula {\cite{DBLP:journals/corr/abs-1805-08281}}. The
strategy is made of attack cycles. During each attack cycle, the attacker adds
blocks to a secret fork and broadcasts them to peers with an appropriate
timing. This is a deviant strategy from the Bitcoin protocol since an honest
miner never withholds validated blocks and always mines on top of the last block of the
official blockchain {\cite{nakamoto2008bitcoin}}.

A rigorous profitability analysis that incorporates time considerations 
was done in {\cite{DBLP:journals/corr/abs-1805-08281}}. The  objective function based on sound
economics principles that allows the comparison of profitabilities of different 
mining strategies with repetition is the \textit{Revenue Ratio} $\frac{\mathbb{E} [R]}{\mathbb{E} [T]}$
where $R$ and $T$ are respectively the revenue and the duration time 
per attack cycle. A blockcwithholding attack slows down the production of blocks, hurting the 
profitability per unit time of all miners, including the attacker. Only after a difficulty adjustment, 
the attack can become profitable. The  mean duration time of block production becomes
equal to $\mathbb{E} [L] \cdot \tau_B$ where $L$ is the number of blocks
added to the official blockchain by the network per attack cycle and $\tau_B =
600$ sec. is the mean validation time of a block in Bitcoin network {\cite{DBLP:journals/corr/abs-1811-09322}}. 
For Ethereum $\tau_E$ is around $12$ sec. (in what follows, we use subscript $B$ or $E$ depending on 
which network we consider).

The Revenue Ratio becomes proportional
to the long-term \textit{apparent hashrate} of the strategy $\tilde{q} =
\frac{\mathbb{E} [Z]}{\mathbb{E} [L]}$ where $Z$ is the number of blocks
added by the attacker to the official blockchain per attack cycle. 
This apparent hashrate becomes a proxy for the Revenue Ratio and can be used as 
a benchmark for profitability, but only after a difficulty adjustment. Several
methods have been devised to compute \ $\tilde{q}$. The original approach from {\cite{DBLP:journals/cacm/EyalS18}} 
uses a Markov model. Then the stationary probability is computed and used to compute the long term 
apparent hashrate. In {\cite{DBLP:journals/corr/abs-1805-08281}} we
use Martingale techniques and consider Poisson processes and associated
stopping times in order to compute the Revenue Ratio, and also the expected number of blocks $\mathbb{E} [Z]$ 
added by the attacker to the blockchain per attack cycle. The Revenue Ratio is computed 
at once using Doob's Stopping Time Theorem for Martingales. This last method has the advantage to 
compute the correct profitability analysis directly, not by means of the proxy of the long term apparent 
hashrate. For example, we can compute how long it takes for the attacker to have profit, something 
that is impossible to compute with the old Markov chain model. Moreover, with the Martingale techniques we clearly identify 
the difficulty adjustment formula as the origin of the vulnerability of the
protocol. A Bitcoin Improvement Proposal (BIP) was proposed in {\cite{DBLP:journals/corr/abs-1805-08281}} to prevent blockwithholding attacks. It consists 
in incorporating orphan blocks in the computation of the apparent hashrate of the network, and this is done by signaling orphan blocks.
Something similar is done in Ethereum where rewards are given to some orphan blocks (``uncle'' blocks). The goal was to favor mining 
decentralization.

\paragraph*{Main goal}

In this article we present a direct combinatorial approach for the direct computation of the apparent 
hashrate for different blockwithholding strategies in Bitcoin and Ethereum. These formulas are sometimes complicated,
so it is remarkable that such a direct approach is possible. We don't need to use Markov chain, nor 
Martingale theory, and only elementary combinatorics using Dyck words. This analysis does not provide the full strength 
of the Martingale theory approach, but provides the basic formulas to estimate the long term apparent hashrates, and 
hence the profitabilities of the different strategies. The situation in Ethereum is combinatorially more complex 
due to the reward of ``uncle'' blocks and their signaling, which gives a larger spectrum of possible strategies. 
Our combinatorial approach also gives closed-form formulas for the apparent hashrate of one of the most effective strategy.

\paragraph*{Notation}

As usual, the relative hashrate of the honest miners (resp. attacker) is denoted by $p$
(resp. $q$) and $\gamma$ is its \textit{connectivity} to the network. We have $p + q = 1$, $q < \frac{1}{2}$ and $0 \leq \gamma
\leq 1$. When a competition occurs between two blocks or
two forks, $\gamma$ is the fraction of the honest miners who mine on top
of a block validated by the attacker.

\medskip

We will make use of Catalan numbers and Dyck words. 
Catalan numbers can be defined by
$$
C_n=\frac{1}{2n+1}\binom{2n}{n}=\frac{(2 n) !}{n! (n + 1) !}
$$
and their generating series is 
$$
C (x) = \sum_{n=0}^{+\infty} C_n x^n=\frac{1 - \sqrt{1 - 4 x}}{2 x}
$$
A Dyck word is a string (word) composed by two letters $X$ and $Y$ such that no initial segment of the string contains 
more $Y$'s than $X$'s. The relation with Catalan numbers is that the $n$-th Catalan number is the number of Dyck words 
of length $2n$. We refer to {\cite{MR2526440}} for more  properties and background material. 

\section{Selfish mining}
\label{sec:Selfish}

An attack cycle for the SM strategy (see \cite{DBLP:journals/corr/abs-1805-08281}) can be described as a sequence $X_0 \ldots
X_n$ with $X_i \in \{S, H\}$. The index $i$ indicates the $i$-th block
validated since the beginning of the cycle and the letter $S$, resp. $H$, indicates that the selfish, resp. honest, 
miner has discovered this block. From this labelling we will get the relation with Dyck words.

\begin{example}
  The sequence $\text{SSSHSHH}$ means that the selfish miner has been first to
  validate three blocks in a row, then the honest miners have mined one, then
  the selfish miner has validated a new one and finally the honest miners have
  mined two blocks. At this point, the advantage of the selfish miner is only
  of one block. So according to the SM strategy, he decides to publish his
  whole fork and ends his attack cycle. In that case, we have $L = Z = 4$.
\end{example}

We are interested in the distribution of the random variable $L$.

\begin{theorem}
  We have $\mathbb{P} [L = 1] = p, \mathbb{P} [L = 2] = pq + pq^2$ and for
  $n \geqslant 3$, $\mathbb{P} [L = n] = pq^2  (pq)^{n - 2} C_{n - 2}$ where
  $C_n$ is the $n$-th Catalan number.
\end{theorem}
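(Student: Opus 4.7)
The plan is to break attack cycles into three types according to the first one or two mined blocks, compute each contribution to the distribution of $L$ separately, and identify the combinatorial middle of each cycle with a Dyck path so that Catalan numbers appear.

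First I would handle the two short cases. If $X_0=H$ (probability $p$), then no attack takes place and the cycle ends at once with $L=1$, giving the first formula. If $X_0X_1=SH$ (probability $pq$), the attacker's lead jumps to $1$ and then drops to $0$, producing a race; whatever the outcome of that race (selfish wins if $X_2=S$, or the race is broken by $X_2=H$ landing on either fork according to $\gamma$), exactly two blocks enter the official chain, so this whole family contributes $pq\cdot 1=pq$ to $\mathbb{P}[L=2]$. The remaining cycles start with $SS$, and this is where the combinatorics arises.

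Next I would analyse the lead trajectory $Y_j$, defined as the attacker's lead after the $j$-th mined block, with $Y_0=0$, an up-step for each $S$ and a down-step for each $H$. A cycle that begins with $SS$ ends at the first time $Y$ drops from $2$ to $1$; at that moment the attacker publishes his entire private chain, so the official chain coincides with the attacker's private chain and every honest block of the cycle is orphaned. In particular $L$ equals the number of $S$'s. Since $Y$ starts at $0$ and ends at $1$, a cycle of this type with $L=n$ contains $n$ letters $S$ and $n-1$ letters $H$, hence has length $2n-1$; moreover it satisfies $Y_1=1$, $Y_2=2$, $Y_{2n-2}=2$, $Y_{2n-1}=1$, and $Y_j\ge 2$ for every $j\in\{2,\dots,2n-2\}$.

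The combinatorial core is then to count such walks. The subwalk from $Y_2$ to $Y_{2n-2}$ has $n-2$ up-steps and $n-2$ down-steps and stays at height $\ge 2$; subtracting $2$ identifies it with an arbitrary Dyck path of length $2(n-2)$, and there are $C_{n-2}$ of these. Each valid word has probability $q^n p^{n-1}=pq^2(pq)^{n-2}$, so this third family contributes exactly $pq^2(pq)^{n-2}C_{n-2}$ to $\mathbb{P}[L=n]$. For $n\ge 3$ only this family contributes, which yields the third formula; for $n=2$ it contributes $pq^2\cdot C_0=pq^2$, and adding the race contribution $pq$ yields $\mathbb{P}[L=2]=pq+pq^2$. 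The main obstacle will be pinning down the stopping rule for the third type of cycle rigorously, so that the constraint $Y_j\ge 2$ throughout the middle is not merely necessary but exactly characterises these cycles; once that is justified, the Dyck-path bijection and the probability bookkeeping are automatic.
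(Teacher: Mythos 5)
Your proposal is correct and follows essentially the same route as the paper: identify a cycle with $L=n\ge 3$ as a word $SS\,X_1\cdots X_{2(n-2)}H$ whose middle is a Dyck word (your lead-trajectory condition $Y_j\ge 2$ is exactly the Dyck condition shifted by two), count $C_{n-2}$ such words each of probability $p^{n-1}q^n$, and treat $L=1,2$ by direct enumeration. Your split of $\mathbb{P}[L=2]$ into the $SH$-race contribution $pq$ and the degenerate $SSH$ contribution $pq^2C_0$ is just a reorganization of the paper's explicit list $\{SSH, SHS, SHH\}$.
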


\begin{proof}
  For $n \geqslant 3$, we note that $\{L = n\}$ is a collection of sequences
  of the form $w = \text{SSX}_1 \cdots X_{2 (n - 2)} H$ with $X_i \in \{ S, H
  \}$ for all $i$, such that if $S$ and $H$ are respectively replaced by the
  brackets ``(`` and ``)'' then, $X_1 \cdots X_{2 (n - 2)}$ is a Dyck word
  (i.e. balanced parenthesis) with length $2 (n - 2)$ (see {\cite{MR2526440}}). The
  number of letters ``$S$'' (resp. ``$H$'') in $w$ is $n$ (resp. $n - 1$). So,
  we get $\mathbb{P} [L = n] = p^{n - 1} q^n C_{n - 2}$ (see {\cite{MR2526440}}).
  Finally, from the  observation that $\{L = 1\} = \{H\}, \{L = 2\} = \{\text{SSH},
  \text{SHS}, \text{SHH}\}$, the result follows.
\end{proof}

\begin{corollary}
  \label{el}We have $\mathbb{E} [L] = 1 + \frac{p^2 q}{p - q}$
\end{corollary}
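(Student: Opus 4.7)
The plan is to compute $\mathbb{E}[L]$ directly from the distribution established in the preceding theorem, using the generating series of the Catalan numbers. First I would write
\[
\mathbb{E}[L] = 1\cdot p + 2\cdot(pq + pq^2) + \sum_{n \geq 3} n \, pq^2 (pq)^{n-2} C_{n-2},
\]
reindex the tail by $k = n-2$, and factor out $pq^2$, so that the problem reduces to evaluating the two series $S_0 := \sum_{k \geq 0} C_k (pq)^k$ and $S_1 := \sum_{k \geq 0} k C_k (pq)^k$, i.e.\ to understanding $C(x)$ and $xC'(x)$ at $x = pq$.

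The decisive simplification is the identity $1 - 4pq = (p-q)^2$, which follows from $p+q = 1$; together with $p > q$ this gives $\sqrt{1-4pq} = p-q$. Substituting into $C(x) = (1 - \sqrt{1-4x})/(2x)$ immediately yields $C(pq) = 1/p$. For the weighted sum, I would differentiate the identity $xC(x) = (1 - \sqrt{1-4x})/2$ to obtain $xC'(x) = 1/\sqrt{1-4x} - C(x)$, so that $(pq)C'(pq) = 1/(p-q) - 1/p = q/(p(p-q))$.

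Plugging back, the tail contributes
\[
pq^2\bigl[(pq)C'(pq) + 2(C(pq) - 1)\bigr] = \frac{q^3}{p-q} + 2q^3,
\]
and combining with the head terms I would collect everything using $p+q=1$. The polynomial piece $p + 2pq + 2pq^2 + 2q^3 = p(1+2q) + 2q^2(p+q) = p(1+2q) + 2q^2$ collapses to $1+q$, leaving
\[
\mathbb{E}[L] = 1 + q + \frac{q^3}{p-q} = 1 + \frac{q(p-q) + q^3}{p-q} = 1 + \frac{pq - pq^2}{p-q} = 1 + \frac{p^2 q}{p-q},
\]
where in the last step I use $1-q = p$ to write $q - q^2 = pq$.

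The main obstacle I anticipate is not conceptual but bookkeeping: getting the derivative $xC'(x)$ right and then carefully collapsing the head terms with the tail using $p+q=1$ so that the closed form emerges cleanly. The Catalan identity $\sqrt{1-4pq} = p-q$ is the single observation that makes the algebra tractable and produces the factor $p-q$ in the denominator.
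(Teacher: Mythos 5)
Your proposal is correct and follows essentially the same route as the paper: summing $n\,\mathbb{P}[L=n]$ over the distribution and reducing to the values of $\sum_k C_k(pq)^k$ and $\sum_k kC_k(pq)^k$, which are exactly the relations (\ref{pc}) and (\ref{ec}) the paper cites (equivalent to $C(pq)=1/p$ and $(pq)C'(pq)=q/(p(p-q))$ via $\sqrt{1-4pq}=p-q$). The only difference is that you derive these identities from the closed form of $C(x)$ rather than quoting them, and all your algebra checks out.
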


\begin{proof}
  This formula results from the well know relations from {\cite{DBLP:journals/corr/abs-1808-01041}}
  \begin{align}
    \Sigma_{n \geqslant 0} p (pq)^n C_n & = 1  \label{pc}\\
    \Sigma_{n \geqslant 0} np (pq)^n C_n & = \frac{q}{p - q}  \label{ec}
  \end{align}
\end{proof}

We can now compute the apparent hashrate.

\begin{theorem}
  \label{hashratesm}The long-term apparent hashrate of the selfish miner in
  Bitcoin is
  \[ \tilde{q}_B = \frac{[(p - q) (1 + pq) + pq] q - (p - q) p^2 q (1 -
     \gamma)}{pq^2 + p - q} 
  \]
\end{theorem}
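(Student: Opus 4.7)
The plan is to use the definition $\tilde{q}_B = \mathbb{E}[Z]/\mathbb{E}[L]$, with $\mathbb{E}[L]$ already in hand from Corollary \ref{el}, and to compute $\mathbb{E}[Z]$ by a case analysis driven by the Dyck-word parametrisation of attack cycles established in the previous theorem.

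First I would read off the value of $Z$ from the type of cycle. For $L = 1$ the cycle is the single letter $H$ and the attacker collects nothing, so $Z = 0$. For $L = n \geq 3$ the cycle has the form $\text{SS}X_1 \cdots X_{2(n-2)}H$ with $X_1 \cdots X_{2(n-2)}$ a Dyck word; the attacker's lead stays at least $2$ throughout the middle word and only drops to $1$ at the terminal $H$, at which point he publishes his complete $n$-block secret fork, which outruns the $n-1$ honest blocks. All $n$ attacker blocks therefore enter the official chain, so $Z = n = L$. The only case where $Z$ is not pinned down by the sequence alone is $L = 2$: the cycles $\text{SSH}$ and $\text{SHS}$ both yield $Z = 2$, while the cycle $\text{SHH}$ (total probability $p^2 q$) splits according to the connectivity: with probability $\gamma$ the last honest block extends the attacker's published block, giving $Z = 1$; with probability $1-\gamma$ it extends the honest block, giving $Z = 0$.

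Collecting these contributions yields
\[
  \mathbb{E}[Z] = 4 p q^2 + p^2 q \gamma + \sum_{n \geq 3} n \, p q^2 (pq)^{n-2} C_{n-2}.
\]
The tail series is evaluated by the substitution $m = n-2$ and the split $m + 2 = m + 2$, after which it reduces at once to the two identities (\ref{pc}) and (\ref{ec}) already exploited in the proof of Corollary \ref{el}. This gives $\mathbb{E}[Z]$ as a rational function of $p$, $q$ and $\gamma$ with denominator $p - q$.

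The last step is to divide by $\mathbb{E}[L] = (p - q + p^2 q)/(p-q)$ and simplify using $p + q = 1$ to bring everything into the stated compact form. The $\gamma$-linear contribution sits entirely in the $\text{SHH}$ sub-case and reassembles into $-(p-q)p^2 q (1-\gamma)$ in the numerator, while the $\gamma$-independent remainder collapses into $[(p-q)(1+pq) + pq]\,q$. The probabilistic bookkeeping is routine once $Z$ has been identified on each cycle type, and the series evaluation is immediate from (\ref{pc})--(\ref{ec}); I expect the principal obstacle to be the closing algebraic rearrangement, where repeated use of $p + q = 1$ is needed to recognise the compact form of the statement and where the $\gamma$-linear and $\gamma$-independent parts must be disentangled carefully to match the expression in the theorem.
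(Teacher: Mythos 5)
Your proposal is correct and follows essentially the same route as the paper: the identical case analysis of $Z$ by cycle type ($Z=0$ for $H$, $Z=2$ for SSH and SHS, the $\gamma$-split confined to SHH, and $Z=L$ for $L\geq 3$), followed by evaluation of the Catalan series via (\ref{pc})--(\ref{ec}) and division by $\mathbb{E}[L]$ from Corollary \ref{el}; the paper merely organises the bookkeeping as $\mathbb{E}[Z]=\mathbb{E}[L]-\bigl(p+p^2q(2-\gamma)\bigr)$ instead of summing the contributions directly, which is an equivalent computation. One remark: your division by $\mathbb{E}[L]=(p-q+p^2q)/(p-q)$ produces the denominator $p^2q+p-q$, which is the correct one (it agrees numerically with Eyal--Sirer's Formula (8)), so the $pq^2+p-q$ appearing in the statement is a typo that you should not try to reproduce.
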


\begin{proof}
  When $L \geqslant 3$ we are in the situation where all blocks validated by
  the selfish miner end-up in the official blockchain. So, we have $Z = L$. If $L =
  1$, then we have $Z = 0$. Moreover, we have $Z (\text{SSH}) = Z (\text{SHS}) = 2$ and $Z
  (\text{SHH}) = 0$ (resp. $1$) with probability $1 - \gamma$ (resp. $\gamma$).
  So, we have
  \begin{align*}
    \mathbb{E} [Z] & =  \mathbb{E} [L] - p - p^2 q \gamma - 2 p^2 q (1 -
    \gamma)\\
    & =  \mathbb{E} [L] - (p + p^2 q + p^2 q (1 - \gamma))
  \end{align*}
  Using Corollary \ref{el} we get,
  \begin{align*}
    \frac{\mathbb{E} [Z]}{\mathbb{E} [L]} & = \frac{p^2 q + p - q - (p -
    q)  (p + p^2 q + p^2 q (1 - \gamma))}{pq^2 + p - q}\\
    & =  \frac{[(p - q) (1 + pq) + pq] q - (p - q) p^2 q (1 - \gamma)}{pq^2
    + p - q}
  \end{align*}
  This is Proposition 4.9 from {\cite{DBLP:journals/corr/abs-1805-08281}} which is 
  another form of Formula (8) from {\cite{DBLP:journals/cacm/EyalS18}}.
\end{proof}

\section{Stubborn Mining}

We consider now two other block withholding strategies described in
{\cite{DBLP:conf/eurosp/NayakKMS16}}. .

\subsection{Equal Fork Stubborn Mining}

In this strategy, the attacker never tries to override the official
blockchain but, when possible, he broadcasts the part of his secret fork
sharing the same height as the official blockchain as soon as the honest
miners publish a new block. The attack cycle ends when the attacker has been
caught-up and overtaken by the honest miners by one block
{\cite{DBLP:journals/corr/abs-1808-01041,DBLP:conf/eurosp/NayakKMS16}}. We show that the distribution of $L - 1$ is what we
defined as a $(p, q)$-Catalan distribution of first type in {\cite{DBLP:journals/corr/abs-1808-01041}}.

\begin{theorem}
  \label{thepefsm}For $n \geqslant 0$ we have $\mathbb{P} [L = n + 1]= p (pq)^n C_n$.
\end{theorem}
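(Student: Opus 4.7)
The plan is to reuse the word-encoding in $\{S, H\}^*$ from the selfish-mining proof, with only one change: the stopping rule. The cycle now ends the first time the honest miners are one block \emph{ahead} of the attacker, rather than one block behind. I would then reduce the counting to Dyck words of length $2n$ and multiply by the probability of each such word.

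Concretely, let $X_k$ and $Y_k$ denote the number of $S$'s and $H$'s in the first $k$ letters of the cycle. In EFSM the attacker never overrides the public chain, so its tip advances by exactly one block each time an $H$ is mined, and the cycle terminates at the first index $k$ for which $Y_k - X_k = 1$. In particular, $L$ equals the total number of $H$'s in the cycle, so $L = n+1$ forces the cycle to consist of $n+1$ copies of $H$ and $n$ copies of $S$. The event $\{L = n+1\}$ is then the set of words of length $2n+1$ whose last letter is $H$ and whose $2n$-letter prefix contains $n$ copies of each symbol and satisfies $X_k \geq Y_k$ for every $k \leq 2n$; otherwise the stopping rule would have fired earlier.

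Under the substitution $S \mapsto ($ and $H \mapsto )$, these admissible prefixes are precisely the Dyck words of length $2n$, whose number is $C_n$. Since each admissible word has probability $p^{n+1} q^n = p(pq)^n$, summation gives $\mathbb{P}[L = n+1] = p(pq)^n C_n$. The boundary case $n=0$ corresponds to the single-letter cycle $H$ of probability $p$, consistent with the formula.

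The main conceptual obstacle, in my view, is establishing that $L$ equals the number of $H$'s in the cycle. One has to see that the attacker's matched secret blocks, while they briefly share heights with honest blocks and may attract a fraction $\gamma$ of the honest hashrate, never lengthen the final official chain beyond the honest count; the connectivity $\gamma$ only governs the internal split of $L$ into $Z$ and $L - Z$. Once this identification is made, the remainder is a direct application of the same Dyck-word bijection used in the proof of the SM distribution.
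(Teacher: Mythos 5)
Your proposal is correct and follows essentially the same route as the paper: identifying $\{L=n+1\}$ with words $X_1\cdots X_{2n}H$ whose $2n$-letter prefix is a Dyck word under $S\mapsto($, $H\mapsto)$, counting these by $C_n$, and weighting each word by $p^{n+1}q^n$. Your additional justification that $L$ equals the number of $H$'s (since in EFSM the official chain's height advances only on honest blocks, with $\gamma$ affecting only the split of $L$ into $Z$ and $L-Z$) is a detail the paper leaves implicit, but it is not a different argument.
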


\begin{proof}
  For $n \geqslant 0$, $\{L = n + 1\}$ corresponds to
  sequences of the form $w = X_1 \cdots X_{2 n} H$ with $X_i \in \{ S, H \}$
  for all $i$, such that if $S$ and $H$ are respectively replaced by the
  brackets ``(`` and ``)'' then, $X_1 \cdots X_{2 n}$ is a Dyck word with
  length $2 n$.
\end{proof}

\begin{corollary}
  \label{elefsm}We have $\mathbb{E} [L] = \frac{p}{p - q}$
\end{corollary}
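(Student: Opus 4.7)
The plan is to compute $\mathbb{E}[L]$ directly from the distribution given in Theorem \ref{thepefsm}. Writing the definition of expectation,
\[
\mathbb{E}[L] = \sum_{n \geqslant 0} (n+1)\,\mathbb{P}[L = n+1] = \sum_{n \geqslant 0}(n+1)\,p(pq)^n C_n,
\]
I would split this as a sum of two known series, one with weight $1$ and one with weight $n$:
\[
\mathbb{E}[L] = \sum_{n \geqslant 0} p(pq)^n C_n \;+\; \sum_{n \geqslant 0} n\,p(pq)^n C_n.
\]

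Now I would simply invoke the two identities (\ref{pc}) and (\ref{ec}) already recorded in the proof of Corollary \ref{el}, namely $\sum_{n \geqslant 0} p(pq)^n C_n = 1$ and $\sum_{n \geqslant 0} n\,p(pq)^n C_n = \frac{q}{p-q}$. Adding them gives
\[
\mathbb{E}[L] = 1 + \frac{q}{p-q} = \frac{p}{p-q},
\]
which is the desired formula.

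There is essentially no obstacle here beyond recognising that the distribution of $L-1$ is exactly the $(p,q)$-Catalan distribution of the first type already analysed in \cite{DBLP:journals/corr/abs-1808-01041}, so that (\ref{pc}) and (\ref{ec}) apply verbatim. The only mildly delicate point is the index shift: because $L = n+1$ rather than $n$, one picks up an extra additive term $\sum_{n \geqslant 0} p(pq)^n C_n = 1$ compared to the computation in Corollary \ref{el}, which is precisely what turns $\frac{q}{p-q}$ into $\frac{p}{p-q}$. No generating-function manipulation beyond what is already cited is needed.
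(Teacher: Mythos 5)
Your proof is correct and is exactly the argument the paper intends: the paper's proof of Corollary \ref{elefsm} simply says ``Follows from (\ref{pc}) and (\ref{ec})'', and your computation $\mathbb{E}[L] = \sum_{n\geqslant 0}(n+1)p(pq)^nC_n = 1 + \frac{q}{p-q} = \frac{p}{p-q}$ is the intended expansion of that one-line justification. The index-shift observation you make is the right (and only) subtlety.
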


\begin{proof}
  Follows from (\ref{pc}) and (\ref{ec}).
\end{proof}

\begin{theorem}
  \label{thez}The long-term apparent hashrate of a miner following the
  Equal-Fork Stubborn Mining strategy is given by 
  $$
  \tilde{q} = \frac{q}{p} -
  \frac{(1 - \gamma)  (p - q)}{\gamma p}  (1 - pC ((1 - \gamma) pq))
  $$
\end{theorem}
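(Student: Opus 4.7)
The apparent hashrate is $\tilde q = \mathbb{E}[Z]/\mathbb{E}[L]$. Since $\mathbb{E}[L] = p/(p-q)$ by Corollary \ref{elefsm}, the whole task is to compute $\mathbb{E}[Z]$. By Theorem \ref{thepefsm}, a cycle with $L = n+1$ corresponds to a sequence $X_1\cdots X_{2n}H$ whose prefix is a Dyck word and has probability $p(pq)^n C_n$; such a cycle involves $n$ attacker blocks and $n+1$ honest blocks. My plan is to compute $\mathbb{E}[Z \mid L = n+1]$ by conditioning on the sequence and averaging over the $\gamma$-resolutions of the competitions induced by EFSM, and then to sum against $\mathbb{P}[L = n+1]$ using the Catalan generating function.

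The first (and main) step is to show that, on average over the EFSM dynamics,
\[
\mathbb{E}[Z \mid L = n+1] \;=\; \sum_{j=1}^{n}\bigl(1 - (1-\gamma)^j\bigr)
\]
regardless of the particular Dyck-word prefix. The interpretation is that each cycle of length $n+1$ induces exactly $n$ independent Bernoulli($\gamma$) competition trials, and the $i$-th attacker block (indexed by mining order) is retained in the official chain iff at least one of the last $n-i+1$ trials is favorable to the attacker. I would verify this on small cases---for $n=1$ the single cycle $SHH$ gives $\gamma$; for $n=2$ both Dyck prefixes $SSHH$ and $SHSH$ produce $3\gamma - \gamma^2$---and then prove it in general by an inductive argument that tracks how a new $S$ extends the set of surviving-condition trials for all attacker blocks mined so far.

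The second step is algebraic. Rewriting $\sum_{j=1}^n(1-(1-\gamma)^j) = n - \frac{1-\gamma}{\gamma}\bigl(1 - (1-\gamma)^n\bigr)$ and summing against $p(pq)^n C_n$:
\begin{align*}
\mathbb{E}[Z]
&= \sum_{n\geq 0} n\, p(pq)^n C_n \;-\; \frac{1-\gamma}{\gamma} \sum_{n\geq 0}\bigl(1 - (1-\gamma)^n\bigr) p(pq)^n C_n \\
&= \frac{q}{p-q} \;-\; \frac{1-\gamma}{\gamma}\bigl(pC(pq) - pC((1-\gamma)pq)\bigr) \\
&= \frac{q}{p-q} \;-\; \frac{1-\gamma}{\gamma}\bigl(1 - pC((1-\gamma)pq)\bigr),
\end{align*}
using \eqref{ec} for the first sum, \eqref{pc} together with the closed form $C(x) = (1-\sqrt{1-4x})/(2x)$ to give $pC(pq) = 1$ (since $\sqrt{(p-q)^2} = p-q$ for $p>q$), and the definition of the generating function for the $(1-\gamma)pq$ term. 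Dividing by $\mathbb{E}[L] = p/(p-q)$ produces the claimed formula for $\tilde q$.

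The main obstacle is the first step---establishing that the survival probability of the $i$-th attacker block is exactly $1 - (1-\gamma)^{n-i+1}$ on \emph{every} Dyck-word sequence and not just on average over sequences of a given length. Once this combinatorial fact about the EFSM dynamics is in hand, the Catalan generating function manipulation above is routine.
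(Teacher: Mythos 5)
Your proposal is correct and takes essentially the same route as the paper: your key identity $\mathbb{E}[Z\mid L=n+1]=\sum_{j=1}^{n}\bigl(1-(1-\gamma)^{j}\bigr)$ is exactly the paper's $n+1-\frac{1-(1-\gamma)^{n+1}}{\gamma}$ (the paper states it in terms of each non-final honest block being replaced with probability $\gamma$ and cites Lemma B.1 of the stubborn-mining reference rather than proving it), and the subsequent summation against $p(pq)^{n}C_{n}$ using \eqref{pc}, \eqref{ec} and $pC(pq)=1$ is identical. The only difference is that you propose to establish the conditional-expectation formula yourself by induction on the Dyck-word structure, which you correctly identify as the one nontrivial ingredient.
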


\begin{proof}
  In an attack cycle, all the honest blocks except the last one have a
  probability $\gamma$ to be replaced by the attacker. So, we have $\mathbb{E} [Z|L =
  n + 1] = n + 1 - \frac{1 - (1 - \gamma)^{n + 1}}{\gamma}$ (see Lemma B.1 in
  {\cite{DBLP:journals/corr/abs-1808-01041}}). Conditioning by $\{ L = n + 1 \}$ for $n \in \mathbb{N}$
  and using Theorem \ref{thepefsm}, we get
  \[ \mathbb{E} [Z] = \frac{q}{p - q} - \frac{1 - \gamma}{\gamma}  (1 - pC
     ((1 - \gamma) pq)) \]
 and the result follows.
\end{proof}

\subsection{Lead Stubborn Mining}

This strategy is similar to the selfish mining strategy but this time the attacker
takes the risk of being caught-up by the honest miners. When this happens,
there is a final competition between two forks sharing the same height. when
the competition is resolved, a new attack cycles starts. In this case, the
distribution of $L - 1$ turns out to be a $(p, q)$-Catalan distribution
of second type as defined in {\cite{DBLP:journals/corr/abs-1808-01041}}.

\begin{theorem}
  \label{theplsm}We have $\mathbb{P} [L = 1] = p$ and for $n \geqslant 1$, $
  \mathbb{P} [L = n + 1] = (pq)^n C_{n - 1}$.
\end{theorem}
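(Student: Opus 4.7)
My plan is to adapt the Dyck-word bijection already used for selfish mining and for Equal Fork Stubborn Mining (Theorem~\ref{thepefsm}), modifying the terminating condition to reflect the LSM rule: a cycle ends at the first ``catch-up'' event where the attacker's lead falls back to zero (or immediately if the very first block is honest). I will encode a cycle as a word $w$ over the alphabet $\{S,H\}$ recording which miner validates each successive block, and I will track the attacker's lead $d$ along the prefixes of $w$, defined as the excess of $S$'s over $H$'s so far, starting from $d=0$.

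The case $L=1$ is immediate: the cycle must be the single-letter word $H$, since any longer cycle requires the attacker to first have built a secret fork, so $\mathbb{P}[L=1]=p$. For $L=n+1$ with $n\geq 1$, the cycle must begin with $S$ and terminate at the first time $d$ returns to $0$, with $d\geq 1$ strictly between these two events. The natural candidate form is therefore
\[
w \;=\; S\, X_1 \cdots X_{2(n-1)}\, H,
\]
where $X_1 \cdots X_{2(n-1)}$ is a Dyck word of length $2(n-1)$ under the convention sending $S$ to ``('' and $H$ to ``)''. There are $C_{n-1}$ such words, each carrying probability $p^n q^n = (pq)^n$, so summing yields $\mathbb{P}[L=n+1] = (pq)^n C_{n-1}$.

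The one subtle bookkeeping point I expect to argue carefully is why the cycle length parameter is $L=n+1$ rather than $n$ or $2n$: at the catch-up event both the attacker's secret fork and the public honest fork have height $n$, and the ensuing head-to-head competition appends a single block to the winning fork, so the official chain advances by $n+1$ regardless of who wins. Because the three competition outcomes (attacker wins, honest wins on the attacker's fork, honest wins on the honest fork) exhaust probability $1$ and all yield the same $L$, the competition block's probability factors out of $\mathbb{P}[L=n+1]$; it reappears later when $\mathbb{E}[Z]$ and $\tilde{q}$ are computed, where the identity of the winner matters. As a sanity check I will verify $\sum_{n\geq 0}\mathbb{P}[L=n+1] = p + pq\,C(pq) = p+q = 1$, using $pqC(pq)=q$, valid for $p>q$.
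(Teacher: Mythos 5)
Your proposal is correct and matches the paper's argument: both identify $\{L=n+1\}$ with words of the form $S\,X_1\cdots X_{2(n-1)}\,H$ followed by the competition-resolving block, where $X_1\cdots X_{2(n-1)}$ is a Dyck word, giving $(pq)^n C_{n-1}$. The only cosmetic difference is that the paper writes the final competition block $Y\in\{S,H\}$ explicitly into the word $w = S X_1\cdots X_{2(n-1)} H Y$, whereas you factor its total probability $1$ out of the count, which is equivalent.
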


\begin{proof}
  We have $\{ L = 1 \} = \{ H \}$ and for $n \geqslant 0$, the condition $\{L =
  n + 1\}$ corresponds to sequences of the form $w = \text{SX}_1 \cdots
  X_{2 (n - 1)} \text{HY}$ with $X_1, \ldots, X_{2 (n - 1)}, Y \in \{ S, H \}$
  and such that if $S$ and $H$ are respectively replaced by the brackets ``(``
  and ``)'' then, $X_1 \cdots X_{2 (n - 1)}$ is a Dyck word with length $2 (n
  - 1)$.
\end{proof}

\begin{corollary}
  \label{ellsm}We have $\mathbb{E} [L] = \frac{p - q + pq}{p - q}$
\end{corollary}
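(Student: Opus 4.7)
The plan is to simply compute $\mathbb{E}[L]$ directly from the distribution obtained in Theorem \ref{theplsm} and reduce the sum to the two known identities (\ref{pc}) and (\ref{ec}) already invoked in Corollaries \ref{el} and \ref{elefsm}. Unlike the Equal-Fork case, here the probabilities $(pq)^n C_{n-1}$ are indexed by $C_{n-1}$ rather than $C_n$, so a re-indexing step will be required before those identities become applicable.

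Concretely, I would first split off the atom at $L=1$ and write
\[
\mathbb{E}[L] \;=\; p + \sum_{n \geq 1} (n+1)(pq)^n C_{n-1}.
\]
Then I would substitute $m = n-1$ to shift the summation to start at $m=0$, yielding
\[
\mathbb{E}[L] \;=\; p + pq \sum_{m \geq 0} (m+2)(pq)^m C_m
\;=\; p + pq\left[\sum_{m \geq 0} m (pq)^m C_m + 2 \sum_{m \geq 0} (pq)^m C_m\right].
\]
At this point formulas (\ref{pc}) and (\ref{ec}), divided through by $p$, give respectively $\sum_{m\geq 0}(pq)^m C_m = 1/p$ and $\sum_{m\geq 0} m(pq)^m C_m = q/(p(p-q))$.

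Substituting these two values produces $\mathbb{E}[L] = p + \frac{q^2}{p-q} + 2q$, and a short algebraic manipulation---combining $p + 2q = 1 + q$ and then putting $1 + q + q^2/(p-q)$ over the common denominator $p-q$---gives $1 + \frac{pq}{p-q} = \frac{p-q+pq}{p-q}$, as claimed.

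No step here is a genuine obstacle; the only mild pitfall is the index shift, since reading the indexing of $C_{n-1}$ carelessly would lead to an off-by-one error in the coefficients $(m+2)$ and in the overall factor of $pq$ pulled out of the sum. Once the shift is performed correctly, the rest is a two-line calculation using the identities already recalled in the paper.
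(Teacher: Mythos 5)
Your computation is correct and follows exactly the route the paper intends: the paper's proof is just the one-line remark that the result ``follows from (\ref{pc}) and (\ref{ec})'', and your re-indexing $m=n-1$ followed by substitution of $\sum_m (pq)^m C_m = 1/p$ and $\sum_m m(pq)^m C_m = q/(p(p-q))$ is precisely the calculation being elided. The final algebra $p + 2q + \frac{q^2}{p-q} = \frac{p-q+pq}{p-q}$ checks out.
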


\begin{proof}
  Follows from (\ref{pc}) and (\ref{ec}).
\end{proof}

By repeating the same argument as in the proof of Theorem \ref{thez} for the
computation of $\mathbb{E} [Z]$, we obtain the following theorem
{\cite{DBLP:journals/corr/abs-1808-01041}}.

\begin{theorem}
  The long-term apparent hashrate of a miner following the Lead Stubborn
  Mining strategy is given by 
  $$
  \tilde{q} = \frac{q (p + pq - q^2)}{p + pq - q}
  - \frac{pq (p - q)  (1 - \gamma)}{\gamma} \cdot \frac{1 - p (1 - \gamma) C
  ((1 - \gamma) pq)}{p + pq - q}
  $$
\end{theorem}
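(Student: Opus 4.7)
Following the strategy of the proof of Theorem \ref{thez}, I would compute $\tilde{q} = \mathbb{E}[Z]/\mathbb{E}[L]$, where the denominator $(p+pq-q)/(p-q)$ is already supplied by Corollary \ref{ellsm}, so all the work lies in the numerator $\mathbb{E}[Z]$, which I would get by conditioning on $L$ using Theorem \ref{theplsm}. The cycle $\{L = 1\} = \{H\}$ contributes $Z = 0$. For $n \geq 1$, I would analyze the cycle word $SX_1 \cdots X_{2(n-1)} HY$: the initial $S$ together with the Dyck segment keeps the attacker exactly one block ahead, the $H$ produces a tie at height $n$, and $Y$ is the block that resolves the final competition.

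The combinatorial core is the conditional expectation $\mathbb{E}[Z | L = n+1]$. Adapting Lemma B.1 of \cite{DBLP:journals/corr/abs-1808-01041} to the Lead Stubborn geometry, each honest block in the cycle can be replaced by an attacker's block with probability $\gamma$ at the corresponding tiebreak, while the terminal block $Y$ introduces a case split ($Y=S$ with probability $q$ yields an outright attacker victory, whereas $Y=H$ with probability $p$ combines with the connectivity $\gamma$ to decide the final competition). I expect this to produce an expression for $\mathbb{E}[Z | L = n+1]$ that is affine in $n$ plus a correction of the form $(1-\gamma)^{n+1}/\gamma$, in parallel with the Equal Fork formula. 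Then writing
\begin{equation*}
\mathbb{E}[Z] = \sum_{n \geq 1} \mathbb{E}[Z | L = n+1] \, (pq)^n C_{n-1},
\end{equation*}
the affine-in-$n$ part collapses via the two master identities (\ref{pc}) and (\ref{ec}), while the geometric $(1-\gamma)^{n+1}$ part sums in closed form involving $C((1-\gamma)pq)$ after the substitution $x \mapsto (1-\gamma)pq$ inside the Catalan generating function. Dividing by $\mathbb{E}[L]$ and factoring out $1/(p+pq-q)$ should then reproduce the two terms displayed in the theorem.

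The main obstacle, as in the Equal Fork case, is obtaining $\mathbb{E}[Z | L = n+1]$ correctly: the Lead Stubborn dynamics involve competitions that can resolve at every internal level of the cycle (not just at the end), and the interaction with the tiebreaker coin flip $Y$ roughly doubles the casework compared with Equal Fork. Once that quantity is correctly identified, the generating-function summation is routine, but the final algebraic simplification still requires careful bookkeeping to match the compact factored form in the statement.
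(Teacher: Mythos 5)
Your proposal follows exactly the paper's route: the paper's own ``proof'' is the single remark that one repeats the argument of Theorem~\ref{thez} --- condition on $\{L=n+1\}$ via Theorem~\ref{theplsm}, adapt Lemma~B.1 of \cite{DBLP:journals/corr/abs-1808-01041} to get $\mathbb{E}[Z\mid L=n+1]$, sum using (\ref{pc}), (\ref{ec}) and the Catalan generating function at $(1-\gamma)pq$, and divide by $\mathbb{E}[L]$ from Corollary~\ref{ellsm} --- with the details deferred to the cited reference. Like the paper, you leave the one genuinely new computation (the exact value of $\mathbb{E}[Z\mid L=n+1]$, including the final two-fork competition resolved by $Y$) unevaluated, so your outline matches the paper's proof in both approach and level of detail.
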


We plot regions in the parameter space $(q, \gamma) \in [0, 0.5] \times [0, 1]$ according to
which strategy is more profitable. We get Figure \ref{Figure1} {\cite{DBLP:journals/corr/abs-1808-01041}} (HM honest mining, 
SM selfish mining, LSM Lead Stubborn mining, EFSM Equal Fork Stubborn mining).

\begin{figure}
\includegraphics[width=0.55\linewidth]{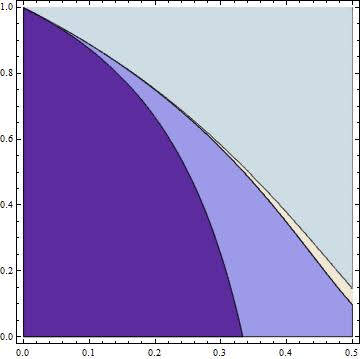}
\caption{From left to right: HM, SM, LSM, EFSM.}
\label{Figure1}
\end{figure}

\section{Selfish mining in Ethereum}

Ethereum is a cryptocurrency based on a variation of the GHOST protocol
{\cite{DBLP:conf/fc/SompolinskyZ15}}. The reward system is different than in Bitcoin, and 
this introduces a supplementary complexity in the analysis of block withholding strategies. 
Contrary to Bitcoin, mined orphan
blocks can be rewarded like regular blocks, with a reward smaller than regular
blocks. The condition for an orphan block to get a reward is to be an ``uncle''
referred by a ``nephew'' which is ``not too far''. By definition, an ``uncle''
is a stale block whose parent belongs to the main chain and a ``nephew'' is a
regular block which refers to this ``uncle''. ``Not too far'' means that the
distance $d$ between the uncle and the nephew is less than some parameter value $n_1$. 
The distance is the number of blocks which separates the nephew to
the uncle's parent in the main chain. When this situation occurs, the nephew
gets an additional reward of $\pi b$ and the uncle gets a reward $K_u (d)
b$ where $b$ denotes the coinbase in Ethereum. Today's parameter values are $n_1 = 6$, $K_u
(d) = \frac{8 - d}{8} \cdot \mathbf{1}_{1 \leqslant d \leqslant 6}$,
$\pi = \frac{1}{32}$ and $b = 2$ ETH {\cite{DBLP:journals/corr/abs-1901-04620}}.

\medskip

There is little published research on selfish mining in Ethereum except for \cite{DBLP:journals/corr/abs-1901-04620} and  \cite{DBLP:conf/eurosp/RitzZ18} 
based on numerical simulations. In \cite{DBLP:journals/corr/abs-1901-04620}, through a Markov chain approach, a non-closed infinite 
double sum is given for the apparent hash rate of the attacker.

\medskip
The general study of selfish mining in Ethereum is complex because equivalent selfish mining strategies in Bitcoin are no longer 
equivalent for Ethereum. The attacker can choose to refer or not uncle  blocks. Referring uncle blocks provides an extra revenue but hurts 
the main goal of selfish mining of lowering the difficulty. Also, he can choose to create artificially more uncles by broadcasting the part of 
his secret fork sharing the same height as the public blockchain of the honest miners. All this different strategies are analyzed in \cite{SME}.
In the present article we restrict to a couple of strategies.

\medskip
In the strategy studied in \cite{DBLP:journals/corr/abs-1901-04620}, the attacker creates as many 
uncles as possible and tries to refer all of them. In \cite{SME}, we prove that this strategy is not optimal and is less  
profitable than the strategy we study in this article, for which we obtain a closed 
form formula for the apparent hashrate of the attacker using only elementary combinatorics. 

In the strategy we consider, the attacker never broadcasts his fork, which remains secret until he is on the edge of being 
caught-up by the honest miners or is actually caught up (this last case can only occur when the attack cycle starts 
with SH). In addition, the attacker always refers to all possible uncle blocks.

\bigskip

We denote by $R$ the revenue by cycle of the selfish miner following this strategy. 
We have $R = R_s + R_u + R_n$ where $R_s$ is
the revenue coming from ``static'' blocks in the main chain i.e., $R_s = Zb$,
$R_u$ is the revenue coming from uncles and $R_n$ is the additional revenue
coming from nephews.

\begin{remark}
  We always have $R_u = 0$ except when the attack cycle is $\text{SHH}$ and
  the last block mined by the honest miners has been mined on top of an honest
  block. In that case, the first block mined by the selfish miner is referred
  by the second block of the honest miners.
\end{remark}
It follows from this remark that
\begin{equation}
  \mathbb{E} \left[ \frac{R_u}{b} \right] = p^2 q (1 - \gamma) K_u (1)
  \label{eru}
\end{equation}

It remains to compute $\mathbb{E}[R_n]$.

\begin{definition}\label{gedef}
  If $\omega$ is an attack cycle, we denote by $U (\omega)$  (resp. $U_s
  (\omega)$, $U_h (\omega)$) the random variable counting the number of 
  uncles created during the cycle $\omega$ which
  are referred by nephew blocks (resp. nephew blocks mined by the selfish
  miner, nephew blocks mined by the honest miners) in the cycle $\omega$ or in a later
  attack cycle.
  
  We denote by $V (\omega)$ the random variable counting the number of
  uncles created during the cycle $\omega$ and are referred by nephew blocks
  (honest or not) in an attack cycle strictly after $\omega$.
\end{definition}

\begin{proposition}\label{proeu2}
  We have $\mathbb{E} [U] = q - q^{n_1 + 1}$.
\end{proposition}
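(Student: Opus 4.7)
The plan is to write $\mathbb{E}[U] = \mathbb{P}(U \text{ created}) - \mathbb{P}(U \text{ created but not referred})$ and to express both probabilities in terms of the initial $S$-run of the cycle. First, I would observe that in the strategy under consideration every attack cycle creates \emph{at most one} uncle, namely the unique stale block whose parent lies on the main chain---the block built directly on top of the previous cycle's tip. When it exists, this uncle sits at height $h+1$ above the starting tip $h$, and it exists iff the cycle begins with $S$, an event of probability $q$.

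Next I would determine when $U$ fails to be referred. A nephew of $U$ must be a main chain block at a height in $\{h+2,\ldots,h+n_1+1\}$ that was mined \emph{after} $U$ became public---a block header is fixed at mining time, so an attacker block forged before $U$ cannot refer to it even though it becomes public simultaneously when the secret fork is finally broadcast. For the generic $L \ge 3$ cycles, the main chain consists of the attacker blocks $S_1,\ldots,S_L$ at heights $h+1,\ldots,h+L$, and $S_j$ is mined after the first $H$ iff $j > i_0$, where $i_0$ denotes the number of initial $S$'s in the cycle's mining sequence. Hence an in-cycle $S_j$ refers to $U$ iff $i_0 < j \le n_1+1$, which is possible iff $i_0 \le n_1$. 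Also, the next cycle's first block appears at height $h+L+1$ and qualifies as a nephew iff $L \le n_1$. Combining the two mechanisms, $U$ fails to be referred iff $i_0 \ge n_1+1$ (which automatically forces $L \ge i_0 \ge n_1+1$, ruling out both sources of nephews).

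The probability computation is then immediate: by independence of successive mining events, $\mathbb{P}(i_0 \ge k) = q^k$, so $\mathbb{P}(U \text{ created but not referred}) = q^{n_1+1}$ and therefore $\mathbb{E}[U] = q - q^{n_1+1}$.

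The main point requiring care will be the handling of the short $L=2$ sub-cases ($SSH$, $SHS$, and the two branches of $SHH$ determined by $\gamma$), where the main chain may contain honest blocks and, in $SHH(1-\gamma)$, the uncle is an attacker $S$ rather than an honest $H$. A direct inspection shows that in every such case $i_0 \le 2 < n_1+1$ and the second main chain block is always a valid nephew at distance $1$; this confirms that the equivalence ``$U$ is referred $\iff$ $i_0 \le n_1$'' is uniform across all cases, so the above counting applies.
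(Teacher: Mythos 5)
Your proposal is correct and is essentially the paper's argument in expanded form: the paper simply observes that $U\in\{0,1\}$ and that $U=0$ exactly when the cycle is $H$ or begins with at least $n_1+1$ blocks $S$, giving $\mathbb{E}[U]=1-p-q^{n_1+1}$, which is the same computation as your $\mathbb{P}(\text{created})-\mathbb{P}(\text{created but not referred})=q-q^{n_1+1}$; your added value is the careful justification (omitted in the paper) that the uncle is referred if and only if $i_0\le n_1$, via the in-cycle nephew $S_{i_0+1}$ or, failing that, the first block of the next cycle. One small slip in your closing verification: for the cycle $SSH$ the second main-chain block $S_2$ is \emph{not} a valid nephew (it is mined before the honest uncle exists, and indeed $i_0=2\not<2$ in your own criterion); the uncle there is referred only by the first block of the following cycle at distance $2$, so the conclusion stands but the sentence claiming the second main-chain block is always a nephew at distance $1$ should be corrected.
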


\begin{proof}
  We have $U = 0$ if and only if the attack cycle is H or if it starts with $n_1 + 1$ blocks of type S.
  Otherwise, we have $U = 1$. So, $\mathbb{E} [U]  =  \mathbb{P} [U > 0] =  1 - (p + q^{n_1 + 1}) =  q - q^{n_1 + 1}$
\end{proof}

We compute now $\mathbb{E} [V]$

\begin{proposition}\label{peus}
  We have $\mathbb{E}[V] = p q^{2}\cdot \frac{1-(p q)^{n_1 - 1}}{1-p q}$.
\end{proposition}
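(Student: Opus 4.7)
I would first argue that $V(\omega)\in\{0,1\}$ and then identify exactly which cycles give $V=1$. As in the proof of Proposition~\ref{proeu2}, each cycle $\omega\ne H$ creates precisely one uncle: either the first honest block $H_1$, whose parent is the pre-cycle tip $P$ at height $h$, or, in the $SHH$ honest-wins sub-case, the block $S_1$. Every subsequent honest block in a cycle has its parent among the other honest blocks of the cycle, which are orphans, so it is not itself an uncle. The three $L=2$ sub-cases $SHS$, $SHH$ attacker-wins, $SHH$ honest-wins each yield $V=0$ by direct inspection: in each the third block of the cycle sits at height $h+2$, has distance $d=1$ from the uncle's parent, and was mined after the uncle had been broadcast, hence is a valid in-cycle nephew. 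So the only candidates for $V=1$ are $SSH$ among $L=2$ cycles and the $SS$-prefix cycles with $L\ge 3$.

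For these I would let $k\ge 2$ denote the length of the initial $S$-run. Using the distance convention implicit in Proposition~\ref{proeu2}, a main-chain block at height $h+j$ has distance $j-1$ from $P$, so eligible nephew heights are $h+2,\dots,h+n_1+1$. Crucially, the attacker's blocks $S_1,\dots,S_k$ are mined \emph{before} $H_1$ and therefore cannot reference it, whereas $S_{k+1},\dots,S_L$, when they exist, can. Hence an in-cycle referrer exists iff $k<L$ and $k\le n_1$. On the other hand, the first block of the next cycle sits at height $h+L+1$ and is necessarily mined after $H_1$ becomes public, so a later-cycle referrer exists iff $L\le n_1$. Combining ``no in-cycle referrer'' (i.e., $k=L$ or $k>n_1$) with $L\le n_1$ and using $k\le L$ forces $k=L\le n_1$, so $\omega=S^{L}H^{L-1}$ with $2\le L\le n_1$.

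Finally, each such cycle has probability $q^{L}p^{L-1}$, since no race is involved and the block types are independent Bernoulli trials, whence
\[
\mathbb{E}[V]=\sum_{L=2}^{n_1}p^{L-1}q^{L}=pq^{2}\sum_{j=0}^{n_1-2}(pq)^{j}=pq^{2}\cdot\frac{1-(pq)^{n_1-1}}{1-pq}.
\]
The main obstacle I foresee is the bookkeeping in the middle paragraph: one must match the distance convention so that the endpoint $L=n_1$ is \emph{included} (this is what yields $(pq)^{n_1-1}$ in the numerator rather than $(pq)^{n_1-2}$), and one must check that the existence of an eligible later-cycle block really does produce an actual reference, which rests on the strategy's rule that the attacker refers to all possible uncles together with the honest miners' protocol-mandated behavior.
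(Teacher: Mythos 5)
Your proposal is correct and follows essentially the same route as the paper: both identify $\{V=1\}$ exactly as the cycles $S^{k}H^{k-1}$ with $2\le k\le n_1$ (the paper asserts this in one line, you supply the supporting case analysis on in-cycle versus later-cycle referrers) and then sum the geometric series $\sum_{k=2}^{n_1}p^{k-1}q^{k}$. The only cosmetic caveat is that the later-cycle referrer should be described as the first \emph{official} block of the next cycle (as the paper does), since the literal first block of the next cycle may itself end up orphaned; this does not affect the conclusion.
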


\begin{proof}
We have $V=1$ if and only if the attack cycle $\omega$ is SS..SH..H with $2\leq k\leq n_1$ S. In that case, 
the first block H is an uncle that will be referred by the first future official block in the attack cycle after $\omega$. Otherwise, $V=0$. 
So, $\mathbb{E}[V] = pq^{2}+\ldots+p^{n_1-1}q^{n_1}$, and we get the result.
\end{proof}

\begin{proposition}\label{euh2}
  We have $\mathbb{E}[U_h] = p^{2}q+\left( p + (1-\gamma)p^{2}q\right) p q^{2}\cdot \frac{1-(p q)^{n_1 - 1}}{1-p q}$.
\end{proposition}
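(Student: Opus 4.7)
The plan is to decompose $\mathbb{E}[U_h]$ by the type of the attack cycle $\omega$ and, for each type, identify the nephew that references the uncle produced in $\omega$ and whether that nephew is honest. Since the underlying withholding strategy is essentially Selfish Mining, the cycles split into four families: $\text{H}$; the short cycles $\text{SHS}$ and the two flavours of $\text{SHH}$; and the long secret-fork cycles starting with at least two $\text{S}$'s (with the same Dyck-word structure as in Section~\ref{sec:Selfish}). In every cycle containing an $\text{H}$, the only orphan that can be an uncle is the first $\text{H}$ mined in time, since any later $\text{H}$ has an orphan parent.

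I would first dispose of the cycles that do not contribute to $U_h$. Cycle $\text{H}$ has no uncle. Cycle $\text{SHS}$ has its uncle referenced at distance $1$ by the attacker's second $S$, so it contributes to $U_s$ rather than $U_h$. In any long cycle containing at least one $S$ after the first $H$, the uncle is referenced inside the cycle by that $S$ at distance $\le n_1$ (the argument used in the proof of Proposition~\ref{proeu2}), and hence again adds only to $U_s$. The positive contributions thus come from: (i) the two $\text{SHH}$ flavours, for which the honest second $\text{H}$ references the sole uncle at distance $1$, giving a combined $p^2 q$; and (ii) cycles of the precise form $\omega = S^k H^{k-1}$ with $2\le k \le n_1$, for which the uncle is necessarily referenced in a later cycle---exactly the event counted by $V$ in Proposition~\ref{peus}.

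For family (ii), the first eligible nephew is the main-chain block at height $h_0 + k + 1$ of the subsequent cycle $\omega'$, and $U_h$ gains $1$ iff that block is honest. Because $\omega'$ is independent of $\omega$ and the relevant probability is independent of $k$, the contribution to $\mathbb{E}[U_h]$ factors as $\mathbb{P}(\text{first main-chain block of }\omega'\text{ is honest})\cdot \mathbb{E}[V]$. A case analysis on $\omega'$ gives: $\omega'=\text{H}$ contributes $p$, and the $\text{SHH}$ sub-case in which the last $H$ is mined on top of the first honest $H$ contributes $p^2 q(1-\gamma)$, while in every other possibility ($\text{SHS}$, the other $\text{SHH}$ sub-case, and any cycle starting with $SS$) the main-chain block at that height is a selfish $S$. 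The total is $p + (1-\gamma) p^2 q$.

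The main obstacle is to track exactly which main-chain block occupies height $h_0 + k + 1$ in each $\omega'$: when $\omega'$ begins with $S$ but resolves as $\text{SHH}$ with the last $H$ on $H$, the attacker's initial secret $S$ is eventually orphaned and that height is filled by the first honest $H$, not by the initially-mined $S$. Combining the two contributions yields $\mathbb{E}[U_h] = p^2 q + (p + (1-\gamma) p^2 q)\,\mathbb{E}[V]$, and substituting the closed form of $\mathbb{E}[V]$ from Proposition~\ref{peus} gives the stated formula.
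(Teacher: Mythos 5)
Your proof is correct and follows essentially the same route as the paper's: you split $\mathbb{E}[U_h]$ into the within-cycle contribution $p^2q$ from the two $\text{SHH}$ flavours and the next-cycle contribution $\bigl(p+(1-\gamma)p^2q\bigr)\mathbb{E}[V]$, using independence of consecutive cycles and the same event $E$ (namely $\omega'=\text{H}$ or $\omega'=\text{SHH}$ with the last honest block mined on the first honest block) that the paper uses. One cosmetic imprecision: your opening claim that the unique uncle candidate is always the first $\text{H}$ mined fails for the $\text{SHH}$ flavour where the last honest block is mined on top of the first honest block --- there the uncle is the attacker's orphaned $S$ (cf.\ the paper's remark before Equation~(\ref{eru})) --- but since in both flavours the sole uncle is referenced by the honest second $\text{H}$, your count $p^2q$ is unaffected.
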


\begin{proof}
  Let $\omega$ be an attack cycle and let $\omega'$ be the attack cycle after $\omega$.
  If $U_h^{(1)} (\omega)$ (resp. $U_h^{(2)} (\omega)$) 
  counts the number of uncles referred by honest nephews only present in $\omega$ (resp. in $\omega'$),
  then we have $U_h = U_h^{(1)} + U_h^{(2)}$. Moreover, $U_h^{(1)} (\omega)={\bf 1}_{\omega=\text{SHH}}$ and  
  $U_h^{(2)} (\omega)={\bf 1}_{\omega'\in E}\cdot V(\omega)$ where $E$ is the event that $\omega'$ is either H or SHH with a second
  honest block mined on top of the first honest block. Hence we get the result by taking expectations since $\omega$ and $\omega'$ are
  independent.
\end{proof}

\begin{corollary}\label{coeub}
  We have
\begin{equation}\label{ern}
  \mathbb{E} \left[ \frac{R_n}{\pi} \right] = q^2 (1+p) - q^{n_1 + 1} 
  - \left( p + (1-\gamma)p^{2}q\right) p q^{2}\cdot \frac{1-(p q)^{n_1 - 1}}{1-p q}
  \end{equation}
\end{corollary}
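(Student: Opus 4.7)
The plan is to express the corollary entirely in terms of the uncle-counting variables introduced in Definition \ref{gedef}. The first step is to observe that $R_n$ counts the selfish miner's nephew bonuses: each of his blocks that references an uncle brings him an extra $\pi b$. Hence, with the normalization convention implicit in (\ref{eru}), the per-cycle nephew revenue satisfies $R_n/\pi = U_s$, where $U_s$ is the random variable of Definition \ref{gedef}. By stationarity of the i.i.d. attack cycles, $\mathbb{E}[U_s]$ admits two equivalent interpretations: the expected number of nephew references by the attacker during a cycle (referring uncles from earlier cycles), and the expected number of uncles \emph{born} in a cycle that are eventually referenced by a selfish nephew in the same or a later cycle. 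It is the second reading that matches Propositions \ref{proeu2} and \ref{euh2} and that we will use.

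The second step is the tautological decomposition $U = U_s + U_h$: every referenced uncle has a unique nephew, which was mined either by the attacker or by an honest miner. Taking expectations gives $\mathbb{E}[U_s] = \mathbb{E}[U] - \mathbb{E}[U_h]$, and substituting the two previous propositions yields
$$\mathbb{E}[U_s] = q - q^{n_1+1} - p^2 q - \bigl(p + (1-\gamma)p^2 q\bigr)\, p q^2 \cdot \frac{1-(pq)^{n_1-1}}{1-pq}.$$

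Finally, the elementary identity $q - p^2 q = q(1-p)(1+p) = q^2(1+p)$ regroups the first three summands into $q^2(1+p) - q^{n_1+1}$, which is precisely the right-hand side of (\ref{ern}). There is no genuine obstacle: the only non-mechanical step is the identification $\mathbb{E}[R_n/\pi] = \mathbb{E}[U_s]$, which rests on the renewal/stationarity argument trading current-cycle references of past-cycle uncles against future-cycle references of current-cycle uncles. Everything else is arithmetic combined with the decomposition $U = U_s + U_h$.
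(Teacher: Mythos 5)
Your proof is correct and takes essentially the same route as the paper's: both rest on the decomposition $\mathbb{E}[U_s] = \mathbb{E}[U] - \mathbb{E}[U_h]$ combined with Propositions \ref{proeu2} and \ref{euh2}, followed by the simplification $q - p^2 q = q^2(1+p)$. Your explicit renewal argument justifying the identification $\mathbb{E}[R_n/\pi] = \mathbb{E}[U_s]$ (trading current-cycle references of past uncles against future references of current uncles) is a step the paper leaves implicit, but it is the intended reading of Definition \ref{gedef}.
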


\begin{proof}
We have $\mathbb{E} [U_s] = \mathbb{E} [U] - \mathbb{E} [U_h]$ and we use Proposition \ref{proeu2} and Proposition \ref{euh2}.
\end{proof}

We can now compute the apparent hashrate of the selfish miner in Ethereum. We have two cases to consider: 
The old difficulty adjustment formula (similar to the one  in Bitcoin), and the current difficulty adjustment formula that 
takesinto account referred uncles.
\begin{theorem}
  The long term apparent hashrate $\tilde{q}_{E, 0}$ of the selfish miner in Ethereum 
  with its old difficulty adjustment formula is given by 
  $\tilde{q}_{E, 0} = \tilde{q}_B + \tilde{q}_u K_u (1) + \tilde{q}_n \pi$ with

  \begin{align*}
    \tilde{q}_u & =  \frac{p^2 q (1 - \gamma) (p - q)}{p - q + p^2
    q}\\
    \tilde{q}_n & =  \frac{(p - q) \left(
   q^2 (1+p) - q^{n_1 + 1} 
  - \left( p + (1-\gamma)p^{2}q\right) p q^{2}\cdot \frac{1-(p q)^{n_1 - 1}}{1-p q}\right)}{p - q + p^2 q}
  \end{align*}
The long term apparent hashrate $\tilde{q}_{E}$ of the selfish miner in Ethereum with its current difficulty adjustment formula is
\begin{equation*}
\tilde{q}_{E}=\tilde{q}_{E, 0}\cdot \xi
\end{equation*} 
where 
$$
\xi = \frac{p - q + p^2 q}{p^2 q + (p-q)\bigl(1 + q - q^{n_1+1} \bigr)}
$$
\end{theorem}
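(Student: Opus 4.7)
The plan is to compute the per-cycle revenue in units of the coinbase $b$ and divide by the appropriate effective cycle length. Decomposing $R = R_s + R_u + R_n$ as in the paper and using $R_s = Zb$, linearity of expectation gives
\[
\tilde{q}_{E,0} \;=\; \frac{\mathbb{E}[R]/b}{\mathbb{E}[L]} \;=\; \frac{\mathbb{E}[Z]}{\mathbb{E}[L]} \;+\; \frac{\mathbb{E}[R_u]/b}{\mathbb{E}[L]} \;+\; \frac{\mathbb{E}[R_n]/b}{\mathbb{E}[L]}.
\]
The first summand is exactly $\tilde{q}_B$ by Theorem \ref{hashratesm}. For the second summand, I would substitute equation (\ref{eru}) and divide by $\mathbb{E}[L] = (p-q+p^2q)/(p-q)$ from Corollary \ref{el}; factoring out $K_u(1)$ yields the announced $\tilde{q}_u$. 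For the third summand, I would substitute the explicit expression from Corollary \ref{coeub} and again divide by $\mathbb{E}[L]$; factoring out $\pi$ yields the announced $\tilde{q}_n$. This settles the old-formula case.

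For the current difficulty adjustment, the key observation is that referred uncles now enter the difficulty reset on the same footing as main-chain blocks, so after adjustment the mean validation time \emph{per counted block} is $\tau_E$. Consequently the expected cycle duration changes from $\mathbb{E}[L]\tau_E$ to $\mathbb{E}[L+U]\tau_E$, and the apparent hashrate is rescaled by the factor $\mathbb{E}[L]/\mathbb{E}[L+U]$. Inserting $\mathbb{E}[U] = q - q^{n_1+1}$ from Proposition \ref{proeu2} and simplifying,
\[
\mathbb{E}[L+U] \;=\; \frac{p-q+p^2 q}{p-q} + q - q^{n_1+1} \;=\; \frac{p^2 q + (p-q)\bigl(1+q-q^{n_1+1}\bigr)}{p-q},
\]
so that $\mathbb{E}[L]/\mathbb{E}[L+U]$ equals $\xi$ as defined, and therefore $\tilde{q}_E = \tilde{q}_{E,0}\cdot\xi$.

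The main obstacle is conceptual rather than computational: one must argue carefully that in Ethereum's current rule a referred uncle contributes to the clock used to renormalize the target, so that the denominator of the apparent-hashrate proxy becomes $\mathbb{E}[L+U]$ rather than $\mathbb{E}[L]$. Once this is granted, the whole proof is a straightforward assembly of Theorem \ref{hashratesm}, equation (\ref{eru}), Corollary \ref{el}, Corollary \ref{coeub} and Proposition \ref{proeu2}, with no further combinatorial input required.
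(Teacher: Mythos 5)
Your proposal is correct and follows essentially the same route as the paper: write $\tilde{q}_{E,0}=\mathbb{E}[R]/(b\,\mathbb{E}[L])$ and $\tilde{q}_{E}=\mathbb{E}[R]/\bigl(b\,(\mathbb{E}[L]+\mathbb{E}[U])\bigr)$, then assemble Theorem \ref{hashratesm}, equation (\ref{eru}), Corollary \ref{el}, Corollary \ref{coeub} and Proposition \ref{proeu2}. The paper's proof is a two-line assertion of exactly these two identities followed by the same substitutions; your version merely makes explicit the unit normalization by $b$ and the justification that referred uncles enter the denominator under the current difficulty adjustment, which the paper leaves implicit.
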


\begin{proof}
We have $\tilde{q}_{E, 0}=\frac{\mathbb{E} [R]}{\mathbb{E} [L]}$ 
and $\tilde{q}_{E}=\frac{\mathbb{E} [R]}{\mathbb{E} [L] +\mathbb{E} [U]}$
We then use Proposition \ref{proeu2}, (\ref{eru}), (\ref{ern}) and the formula for $\tilde{q}_{B}$ in Theorem \ref{hashratesm}.
\end{proof}

We can now compare this strategy to selfish mining in Bitcoin. Observe that $\tilde{q}_{E, 0}>\tilde{q}_{B}$, where $\tilde{q}_{B}$
is the long term apparent hashrate of the Bitcoin selfish miner. Therefore, the minimal threshold $q_{\min}$ such that the inequality 
$\tilde{q} > q$ for $q > q_{\min}$ is always lower in Ethereum with its old adjustment formula than in Bitcoin. 
This is due to the particular reward system that indeed favors selfish mining as we have proved. 
Notice also that when $q > q_{\min}$, the attack is profitable 
faster in Ethereum than in Bitcoin because of another difference in the protocols: In Ethereum the difficulty is updated at each block and 
in Bitcoin only after $2016$ blocks.

\medskip

Figure \ref{Figure2} plots the regions in parameter space $(q, \gamma) \in [0, 0.5] \times [0, 1]$ where
each strategy HM or SM is more profitable. We find $q_{\min} \approx 9.5\%$ 
when $\gamma = 0$.

\begin{figure}
\includegraphics[width=0.55\linewidth]{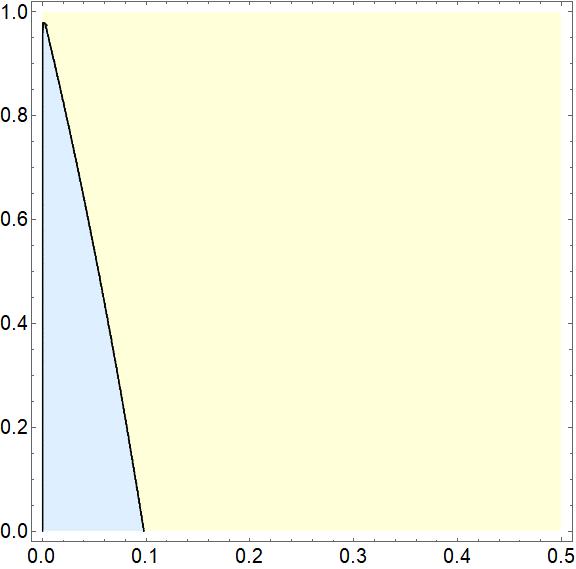}
\caption{HM vs. SM Ethereum old difficulty adjustment.}
\label{Figure2}
\end{figure}

Now, Ethereum with its new difficulty adjustment formula is more resilient to selfish mining.
Figure \ref{Figure3} plots the region in parameter space $(q, \gamma) \in [0, 0.5] \times [0, 1]$ where each
strategy HM or SM is more profitable.  

\begin{figure}
\includegraphics[width=0.55\linewidth]{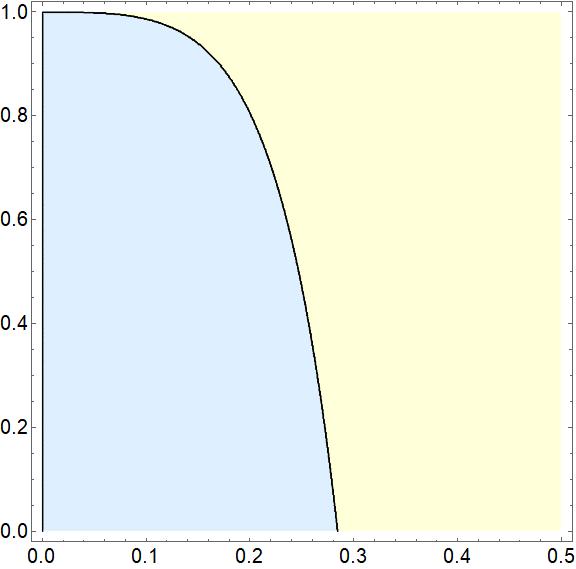}
\caption{HM vs. SM Ethereum new difficulty adjustment.}
\label{Figure3}
\end{figure}

We note that Bitcoin is more resilient to selfish mining when the relative hashrate of the attacker is high, but we have 
the opposite for smaller relative hashrates.
This means that when the relative hashrate of the attacker is small (resp. high) then, the 
connectivity of the attacker should be higher (resp. lower) in Ethereum than in Bitcoin for the attack to be profitable. 
Figure \ref{Figure4} compares the thresholds curves between HM and SM in Bitcoin and Ethereum.

\begin{figure}
\includegraphics[width=0.55\linewidth]{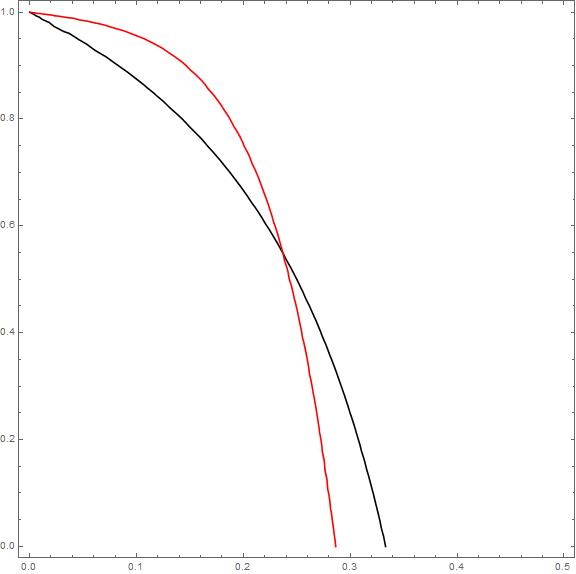}
\caption{HM vs. SM in Bitcoin and Ethereum.}
\label{Figure4}
\end{figure}

\section{Conclusions.}

We have computed closed-form formulas for the long term apparent hashrate of different
blockwithholding strategies for Bitcoin and Ethereum using only elementary combinatorics, 
Dyck words, Catalan numbers, and their properties. Although this approach does not 
provide a complete analysis of the profitability of the strategies, as for example the time
it takes to the strategy to become profitable, this minimalist approach is 
sufficient to compare profitabilities in the long run. In the strategies studied we have 
show the impact of the different reward system.  For these strategies, depending on given parameters $(q,\gamma)$, relative 
hashrate and connectivity of the attacker, we have determined which network is more resilient to 
selfish mining attacks.

\bibliography{SMDyck}

\end{document}